\newcommand{\mytitle}{Sparsity Preserving Algorithms for Octagons}
\newcommand{\secref}[1]{\hyperref[#1]{\S\ref*{#1}}}
\newcommand{\figref}[1]{\hyperref[#1]{Fig.~\ref*{#1}}}
\newcommand{\thmref}[1]{\hyperref[#1]{Theorem~\ref*{#1}}}
\newcommand{\defref}[1]{\hyperref[#1]{Definition~\ref*{#1}}}
\newcommand{\exref}[1]{\hyperref[#1]{Example~\ref*{#1}}}
\begin{document}
\begin{frontmatter}
\title{\mytitle}
\author{Jacques-Henri Jourdan}
\address{MPI-SWS, Inria Paris}
\thanks{This work was supported by Agence Nationale de la Recherche,
  grant ANR-11-INSE-003.}
\begin{abstract}
  Known algorithms for manipulating octagons do not preserve their
  sparsity, leading typically to quadratic or cubic time and space
  complexities even if no relation among variables is known when they
  are all bounded. In this paper, we present new algorithms, which use
  and return octagons represented as weakly closed difference bound
  matrices, preserve the sparsity of their input and have better
  performance in the case their inputs are sparse. We prove that these
  algorithms are as precise as the known ones.
\end{abstract}
\end{frontmatter}

\newcommand\lsharp{{}^\sharp\!\mathord}

\section{Introduction}

In order to capture numerical properties of programs, static analyzers
use \emph{numerical abstract domains}. The choice of a numerical
abstract domain in a static analyzer is a compromise between
\emph{precision}, the ability of capturing complex numerical
properties, and performance. Non-relational abstract domains, such as
intervals~\cite{cousot1976static}, are very efficient but relatively
imprecise: they cannot represent relations between program
variables. On the other hand, in order to capture numerical relations
between program variables, one can express them as linear
inequalities. This class of relational numerical abstract domain is
composed of \emph{linear abstract domains}. A linear abstract domain
corresponds to a different precision vs.\ performance trade-off: they
range from the less precise, efficient ones such as
zones~\cite{mine2004phd}, pentagons~\cite{logozzo2008pentagons} or
octagons~\cite{mine2004phd,mine2006octagon} to the more precise,
costly ones, such as subpolyhedra~\cite{laviron2009subpolyhedra},
octahedra~\cite{clariso2004octahedron}, two variables per
inequalities~\cite{simon2010two}, zonotopes~\cite{goubault2006static}
or general polyhedra~\cite{cousot1978automatic}.

In particular, the Octagon abstract
domain~\cite{mine2004phd,mine2006octagon} accurately represents many
of the variable relationships appearing in a program, while being
still reasonably fast (all the operations have quadratic or cubic
complexity on the number of variables). It is very popular in the
static analysis community, which explains why algorithmic
improvements~\cite{chawdhary2014simple,bagnara2009weakly,singh2015making}
and precision improving variants~\cite{chen2014abstract} are regularly
published.

As reported by the designers of Astrée~\cite{cousot2009does}, its
quadratic or cubic performances still make it unusable as-is with a
reasonable number of variables. Indeed, the data structures typically
used to represent octagonal abstract values, i.e., strongly closed
difference bound matrices, have a quadratic size in the number of
variables for which an upper or lower bound is known. A common
solution is the use of \emph{variable packing}~\cite[\S
8.4.2]{mine2004phd}, where the Octagon abstract domain is only used on
small packs of variables. The downside of packing is that no relation
is stored between variables that are not in the same pack. A variant
of packing has been introduced to mitigate the
imprecision~\cite{bouaziz2012treeks}, but loss in precision can still
occur.

The problem of the performance of octagons has already been studied:
in particular, Singh et al.~\cite{singh2015making} proposed an
implementation of the Octagon abstract domain optimized in the case
its representation is sparse. But they do not address the fact that it
is dense as soon as interval bounds are known for many variables, and
we anticipate that, for this reason, the sparsity is very low in their
implementation.

Instead, in this paper, we propose to use new algorithms for the
Octagon abstract domain: these algorithms work on a sparse
representation for octagons, so that the cost of the analysis of two
independent sets of variables is the sum of the costs of the analyses
of the two sets of variables, taken independently. Our algorithms have
the same precision as the traditional ones. Our main idea is the
following: in order to ensure an optimal precision of all the
operations, the data structures representing octagons, difference
bound matrices, are usually kept \emph{strongly closed}: that is,
algorithms make sure that any returned difference bound matrix is a
best abstraction. However, most often, strongly closed difference
bound matrices are dense because of the necessary \emph{strengthening}
step. In this paper, we propose to weaken the maintained invariant on
difference bound matrices and to keep them \emph{weakly closed} hence
skipping the strengthening step. Weakly closed difference bound
matrices are not necessarily dense, so that we can use sparse data
structures to represent them. We prove that some algorithms can be
kept unchanged to work on weakly closed difference bound matrices
without losing any precision and give new algorithms for the other
operations.

We begin by preliminary definitions in \secref{sec:diffbndmat}. In
\secref{sec:dbmops}, we describe and prove the soundness and relative
precision of our new algorithms. We conclude in
\secref{sec:conclusion}.

\section{Definitions}
\label{sec:diffbndmat}

\newcommand\varplus{\mathbb V\!_+}
\newcommand\varpm{\mathbb V\!_\pm}

Let $\varplus$ be a finite set of variables. We call a \emph{regular
  environment} a function from $\varplus$ to $\mathbb R$. A regular
environment represents the numerical state of a program. The role of
the Octagon abstract domain is to approximate sets of regular
environments~$\rho$. To that end, the abstract domain of octagons
stores a set of inequalities of the following form:
\begin{equation}
\label{eq:octConstr1}
\pm\rho(u)\pm\rho(v) \leq Cst_{uv} \qquad u,v\in\varplus
\end{equation}
This corresponds to giving bounds to sums and differences of values of
$\rho$. Moreover, if we use twice the same variable with the same
sign, we see that, using such constraints, we can express interval
constraints over values of an environment~\cite{mine2004phd}.

In order to handle in a unified way all the different combinations of
signs in these constraints, we introduce the set $\varpm$ of
\emph{signed variables}. Signed variables are of two kinds: they are
either usual variables from $\varplus$, called \emph{positive
  variables} in the context of signed variables, or their opposites
form, \emph{negative variables}. We equip $\varpm$ with an involutive
operator, associating to each signed variable $v$ its opposite
$\overline v$, such that $v$ is positive if and only if $\overline v$
is negative.

Regular environments are canonically extended to signed variables by
taking $\rho(\overline u) = -\rho(u)$. More generally, we define
\emph{irregular environments} as functions $\sigma$ from $\varpm$ to
$\mathbb R$, and consider the set of regular environments as a subset
of the set of irregular environments. Regular environments are exactly
irregular environments $\rho$ that satisfy the property $\forall v,
\rho(\overline v) = -\rho(v)$.

Using this new formalism, octagonal constrains of the
form~\eqref{eq:octConstr1} can be seen as upper bounds on differences
of values of $\rho$, a regular environment:
\begin{equation}
\label{eq:octConstr2}
\rho(u)-\rho(v) \leq Cst_{uv} \qquad u,v\in\varpm
\end{equation}
This has two benefits: first, all the different kinds of constraints
allowed by~\eqref{eq:octConstr1} get factored out as one simpler
form. Second, we can see these constraints as constraints on irregular
environments, and further constrain them as being regular: we see that
the study of the Octagon abstract domain starts by the study of a
simpler abstract domain, where only differences of variables are
bounded. The set of constraints, called \emph{potential constraints},
of such an abstract domain is well studied in the linear optimization
literature, because it corresponds to the well-known shortest path
problem in a weighted directed graph.

Such a set of constraints is represented as difference bound matrices:
a \emph{difference bound matrix}, or \emph{DBM}, is a matrix
$(B_{uv})_{(u,v)\in\varpm^{\,2}}$ of elements of $\mathbb
R\cup\{+\infty\}$. The meaning of these constraints is given by two
concretization functions $\gamma_{pot}$ and $\gamma_{oct}$, that
associate to a DBM the set of irregular or regular environments,
respectively, satisfying all the constraints:
\begin{align}
  \gamma_{pot}(B) &= \{ \sigma : \varpm \rightarrow
  \mathbb R \mid \forall uv \in \varpm,\;\sigma(u)-\sigma(v)
  \leq B_{uv}\} \\
  \gamma_{oct}(B) &= \{ \rho \in \gamma_{pot}(B) \mid
  \forall u \in \varpm,\; \rho(\overline u) = - \rho(u) \}
\end{align}

\begin{example}
  \label{ex:firstex}
  Consider $\varplus = \{ x; y; z \}$ a set of three (positive)
  variables. The set of signed variables is $\varpm = \{ x ; \overline
  x; y; \overline y; z; \overline z\}$. Let $A$ be the DBM such that
  $A_{x \overline x} = 1$, $A_{\overline y y} = 3$, $A_{y \overline z}
  = 1$ and $A_{uv} = +\infty$ for all the other entries. The set
  $\gamma_{oct}(A)$ contains all the environments $\rho : \varpm
  \rightarrow \mathbb{R}$ such that:
  \begin{itemize}
  \item $\forall u \in \varpm,\; \rho(\overline u) = - \rho(u)$
  \item $\rho(x) \leq 1/2$, $-\rho(y) \leq 3/2$ and $\rho(y) + \rho(z) \leq 1$
  \end{itemize}
  This concretization is assimilated to the set of environments $\rho
  : \varplus \rightarrow \mathbb{R}$ over positive variables such that
  $\rho(x) \leq 1/2$, $-\rho(y) \leq 3/2$ and $\rho(y) + \rho(z) \leq
  1$.
\end{example}

\newcommand\dbmleq{\mathrel{\lsharp{\,\leq}}}

We denote as $\dbmleq$ the natural order relation over DBMs, defined
as follows:
\begin{equation}
  A \dbmleq B \Leftrightarrow
  \forall uv \in \varpm,\;A_{uv} \leq B_{uv}
\end{equation}
The following easy lemma states that this order relation makes
$\gamma_{pot}$ and $\gamma_{oct}$ increasing, which makes $\dbmleq$ a
good candidate for a comparison operator of the Octagon abstract
domain\footnote{As we see in \secref{sec:octwcomp}, this is \emph{not}
  the order relation we use as a comparison operator in our
  implementation of the Octagon abstract domain.}:
\begin{lemma}
\label{lem:octgammamonot}
Let $A$ and $B$ be two DBMs such that $A \dbmleq B$. Then, we
have:
\begin{align*}
  \gamma_{pot}(A) &\subseteq \gamma_{pot}(B) &
  \gamma_{oct}(A) &\subseteq \gamma_{oct}(B)
\end{align*}
\end{lemma}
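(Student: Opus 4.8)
The plan is to unfold the definitions of the two concretization functions and reduce everything to transitivity of the order $\leq$ on $\mathbb R \cup \{+\infty\}$. Both inclusions are direct, so the proof will be short; the only point that deserves a word is that the comparison $\sigma(u)-\sigma(v) \leq A_{uv}$ still behaves well when $A_{uv} = +\infty$, but this is immediate since every real is below $+\infty$ and $+\infty \leq +\infty$.

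First I would handle the potential case. Take an arbitrary $\sigma \in \gamma_{pot}(A)$; by definition, for every pair $uv \in \varpm$ we have $\sigma(u) - \sigma(v) \leq A_{uv}$. Since $A \dbmleq B$ means $A_{uv} \leq B_{uv}$ for all $uv$, transitivity of $\leq$ gives $\sigma(u) - \sigma(v) \leq B_{uv}$ for every $uv$, which is exactly the condition for $\sigma \in \gamma_{pot}(B)$. As $\sigma$ was arbitrary, $\gamma_{pot}(A) \subseteq \gamma_{pot}(B)$.

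Then the octagonal case follows by construction from the potential one. Take $\rho \in \gamma_{oct}(A)$. By definition $\rho \in \gamma_{pot}(A)$, so by the previous paragraph $\rho \in \gamma_{pot}(B)$; moreover $\rho$ satisfies $\rho(\overline u) = -\rho(u)$ for all $u \in \varpm$, and this regularity condition is the same in the definition of $\gamma_{oct}(B)$ — it does not refer to $A$ or $B$ at all. Hence $\rho \in \gamma_{oct}(B)$, giving $\gamma_{oct}(A) \subseteq \gamma_{oct}(B)$.

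There is essentially no obstacle here: the statement is a routine monotonicity check, and the argument is purely a matter of chaining inequalities and re-reading the definition of regularity. If anything, the only thing to be careful about is not to over-formalize the $+\infty$ bookkeeping; a single parenthetical remark that $\leq$ on $\mathbb R \cup \{+\infty\}$ is transitive and total suffices.
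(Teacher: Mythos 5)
Your proof is correct and is the standard argument the paper alludes to when it calls this an ``easy lemma'': unfold the definitions, chain $\sigma(u)-\sigma(v)\leq A_{uv}\leq B_{uv}$ by transitivity, and observe that the regularity condition in $\gamma_{oct}$ does not depend on the matrix. Nothing further is needed.
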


For any non-empty set $S$ of irregular environments, there exists a
minimal (in the sense of $\dbmleq$) DBM that approximates
it. That is, there exists a minimal DBM $\alpha(S)$ such that $S
\subseteq \gamma_{pot}(\alpha(S))$. This property follows immediately
from the definition of~$\alpha$:
\begin{equation}
\alpha(S)_{uv} = \sup_{\sigma\in S} \{ \sigma(u) - \sigma(v)\}
\end{equation}
This function $\alpha$ is called the abstraction function. We can
easily see that $\alpha$ is an increasing function. Moreover, $\alpha$
does not only return best abstractions for $\gamma_{pot}$, but also
for $\gamma_{oct}$: if the set $S$ contains only regular environments,
we can see that $\alpha(S)$ is also the minimal DBM such that $S
\subseteq \gamma_{oct}(\alpha(S))$.  In fact, it is easy to see that
$\dbmleq$ defines a complete lattice over DBMs extended with a bottom
element, and that the pairs $(\alpha, \gamma_{pot})$ and
$(\alpha, \gamma_{oct})$ form Galois connections.

\subsection{Closure and Strong Closure}

Many DBMs have the same concretization. This is a problem, because the
abstract environments that we manipulate are therefore not necessarily
the most precise ones, and this can lead to imprecision. Thus,
usually, an implementation of the Octagon abstract domain maintains
the invariant that it only manipulates ``canonical'' forms of DBMs,
such that $B = \alpha(\gamma_{oct}(B))$. Such ``canonical'' DBMs are
always the best possible representative over all the DBMs with the
same concretization.

An important fact is that we can characterize best abstractions using
the values they contain, and that we have algorithms to compute
them. We expose these characterizations, together with these
algorithms. Moreover, we give a weaker closedness condition over DBMs,
that does not ensure canonicity, but that allows better algorithms
without loss of precision.

\subsubsection{Best abstractions for $\gamma_{pot}$}

A first step is to remark that canonical DBMs always have null
diagonal values. Moreover, canonical DBMs should always verify the
triangular inequality. We call such DBMs \emph{closed} DBMs:
\begin{definition}[Closed DBM]
  A \emph{closed} DBM is a DBM $B$ verifying the two following
  properties:
  \begin{itemize}
  \item $\forall v \in \varpm,\; B_{vv} = 0$
  \item $\forall uvw \in \varpm,\; B_{uw} \leq B_{uv} + B_{vw}$
  \end{itemize}
\end{definition}

Closed DBMs are exactly best abstractions
for~$\gamma_{pot}$~\cite[Theorem 3.3.6]{mine2004phd}. Hence, closed
DBMs always have non-empty concretizations. We do not detail here the
algorithm used to detect the emptiness of the concretization of a DBM
and to compute closures: instead, we refer the interested reader to
previous work~\cite{mine2004phd,bagnara2009weakly}.

\begin{example}
The closure $\alpha(\gamma_{pot}(A))$ of the DBM $A$ as defined in
\exref{ex:firstex} contains the following additional finite entries:
\begin{itemize}
  \item $\forall u \in \varpm,\;\alpha(\gamma_{pot}(A))_{uu} = 0$
  \item $\alpha(\gamma_{pot}(A))_{\overline y\, \overline z} = 4$
    (corresponding to the constraint $\rho(z)-\rho(y) \leq 4$).
\end{itemize}
\end{example}

\subsubsection{Best abstractions for $\gamma_{oct}$}

We now refine the notion of closure to canonical forms for
$\gamma_{oct}$. It is easy to see that, for any non-empty set $S$ of
regular environments, $\alpha(S)_{uv} = \alpha(S)_{\overline
  v\,\overline u}$. Thus, canonical DBMs for $\gamma_{oct}$ will
verify the \emph{coherence} property:

\begin{definition}[Coherent DBM]
  \label{def:coherence}
  A DBM $B$ is \emph{coherent} when:
  \begin{equation*}
    \forall uv\in\varpm,\; B_{uv} = B_{\overline v\,\overline u}
  \end{equation*}
\end{definition}

Moreover, matrix elements of the form $B_{u\overline u}$ (for $u \in
\varpm$) impose interval constraints on values of $\rho$. These
interval constraints can be combined to entail constraints on any
difference of values of $\rho$. For this reason, canonical forms for
$\gamma_{oct}$ will verify the following strong closedness property:
\begin{definition}[Strongly closed DBM]
  A DBM $B$ is \emph{strongly closed} when it is closed and coherent
  and:
  \begin{equation*}
    \displaystyle\forall uv\in\varpm,\; B_{uv}\leq
    \frac{B_{u\overline u}+B_{\overline vv}}{2}
  \end{equation*}
\end{definition}

This condition is necessary and sufficient: strong closedness
characterizes canonical DBMs for $\gamma_{oct}$.

\begin{theorem}
  \label{thm:octstrclosed}
  Let $B$ be a DBM. The two following properties are equivalent:
  \begin{enumerate}[(i)]
  \item $B$ is strongly closed
  \item $\gamma_{oct}(B) \neq \emptyset$ and $B = \alpha(\gamma_{oct}(B))$
  \end{enumerate}
\end{theorem}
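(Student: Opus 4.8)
The plan is to prove the two implications separately, relying on the fact (cited from Miné) that closed DBMs are exactly the best abstractions for $\gamma_{pot}$. For the direction $(ii) \Rightarrow (i)$, suppose $B = \alpha(\gamma_{oct}(B))$ with $\gamma_{oct}(B) \neq \emptyset$. Since $\gamma_{oct}(B) \subseteq \gamma_{pot}(B)$, we have $\alpha(\gamma_{oct}(B)) \dbmleq \alpha(\gamma_{pot}(B))$, and because $\gamma_{pot}$ is increasing (\thmref{lem:octgammamonot}) and $B = \alpha(\gamma_{oct}(B))$ is the minimal DBM whose $\gamma_{oct}$-concretization contains $\gamma_{oct}(B)$, one shows $B$ is in fact closed: the key observation is that $\alpha$ applied to any nonempty set of \emph{regular} environments already yields a closed DBM (null diagonal from $\sigma(u)-\sigma(u)=0$, triangle inequality from $(\sigma(u)-\sigma(v)) + (\sigma(v)-\sigma(w)) = \sigma(u)-\sigma(w)$). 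Coherence is immediate from the remark already made in the text: for regular environments $\sigma(u)-\sigma(v) = \sigma(\overline v) - \sigma(\overline u)$, so $\alpha(S)_{uv} = \alpha(S)_{\overline v\,\overline u}$. The strengthening inequality $B_{uv} \leq (B_{u\overline u} + B_{\overline v v})/2$ follows similarly: for any regular $\rho$, $\rho(u)-\rho(v) = \tfrac12\big((\rho(u)-\rho(\overline u)) + (\rho(\overline v)-\rho(v))\big)$, so taking suprema over $\rho \in \gamma_{oct}(B)$ gives $\alpha(\gamma_{oct}(B))_{uv} \leq \tfrac12(\alpha(\gamma_{oct}(B))_{u\overline u} + \alpha(\gamma_{oct}(B))_{\overline v v})$, which is exactly the required inequality since $B = \alpha(\gamma_{oct}(B))$.

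The harder direction is $(i) \Rightarrow (ii)$: assume $B$ strongly closed and prove $\gamma_{oct}(B) \neq \emptyset$ and $B = \alpha(\gamma_{oct}(B))$. Since $B$ is closed, $\gamma_{pot}(B) \neq \emptyset$ and $B = \alpha(\gamma_{pot}(B))$, so in particular $B \dbmleq \alpha(\gamma_{oct}(B))$ once we know $\gamma_{oct}(B)$ is nonempty (using monotonicity of $\alpha$ and $\gamma_{oct}(B) \subseteq \gamma_{pot}(B)$ the other inequality $\alpha(\gamma_{oct}(B)) \dbmleq \alpha(\gamma_{pot}(B)) = B$ is automatic). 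So the real content is twofold: (a) $\gamma_{oct}(B) \neq \emptyset$, and (b) for each pair $uv$, the bound $B_{uv}$ is \emph{attained}, i.e.\ there is a regular $\rho \in \gamma_{oct}(B)$ with $\rho(u)-\rho(v)$ arbitrarily close to (or equal to) $B_{uv}$.

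For (a) and (b) together, the standard technique is an explicit construction of witnessing regular environments, mimicking the shortest-path / potential-function argument for ordinary closed DBMs but symmetrized. Given a target pair $u_0 v_0$ with $B_{u_0 v_0}$ finite, I would define a candidate $\rho$ by something like $\rho(w) = \tfrac12(B_{w \overline{u_0}} - B_{u_0 w})$ (or a suitable finite truncation when entries are $+\infty$, choosing the free potentials large), check that this is regular — $\rho(\overline w) = -\rho(w)$ holds because closure plus the definition make $B_{\overline w\,\overline{u_0}} - B_{u_0 \overline w}$ the negation of $B_{w\overline{u_0}} - B_{u_0 w}$ up to coherence — then verify $\rho \in \gamma_{pot}(B)$ using the triangle inequality, and finally check $\rho(u_0) - \rho(v_0) = B_{u_0 v_0}$, which is precisely where the strengthening inequality $B_{u_0 v_0} \leq (B_{u_0\overline{u_0}} + B_{\overline{v_0} v_0})/2$ is needed to guarantee the symmetrized potential does not overshoot. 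The main obstacle is handling infinite entries cleanly: one must argue that free coordinates can be pushed to $\pm\infty$ (or large finite values) without violating any constraint, so that the supremum defining $\alpha(\gamma_{oct}(B))_{uv}$ equals $B_{uv}$ even when many entries are $+\infty$; this is the bookkeeping-heavy part, and I would isolate it as a lemma about extending partial regular environments. I expect the paper either cites Miné's Theorem for this or gives the symmetrized shortest-path construction in full; my plan follows the latter.
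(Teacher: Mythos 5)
The paper itself gives no proof here: it simply defers to Min\'e's thesis (Theorems 4.3.2 and 4.3.3), so your attempt is necessarily a reconstruction. Your direction (ii)$\Rightarrow$(i) is correct and complete: for a nonempty set $S$ of regular environments, $\alpha(S)$ has a null diagonal, satisfies the triangle inequality, is coherent, and satisfies the strengthening inequality, each obtained from the pointwise identity you state by taking suprema.

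The hard direction, however, has a genuine gap. Your candidate witness $\rho(w)=\tfrac12(B_{w\overline{u_0}}-B_{u_0w})$ is indeed regular (by coherence) and lies in $\gamma_{pot}(B)$ (by the triangle inequality), but it does not saturate $B_{u_0v_0}$: one computes $\rho(u_0)-\rho(v_0)=\tfrac12(B_{u_0\overline{u_0}}-B_{v_0\overline{u_0}}+B_{u_0v_0})$, which equals $B_{u_0v_0}$ only when the triangle inequality $B_{u_0\overline{u_0}}\le B_{u_0v_0}+B_{v_0\overline{u_0}}$ happens to be tight. Concretely, for the strongly closed DBM describing $x,y\in[0,1]$ (so $B_{x\overline x}=B_{y\overline y}=2$, $B_{\overline xx}=B_{\overline yy}=0$, $B_{xy}=1$, $B_{y\overline x}=2$), your $\rho$ gives $\rho(x)-\rho(y)=1-\tfrac12=\tfrac12<1=B_{xy}$. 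The obstruction is not the $+\infty$ bookkeeping you flag: attaining $B_{u_0v_0}$ with a \emph{regular} environment forces the mirrored bound $B_{\overline{v_0}\,\overline{u_0}}$ to be attained simultaneously, and a single symmetrized shortest-path potential centred at one vertex averages the two paths instead of saturating both. The standard repair (Min\'e's actual argument) is structurally different: add the reverse constraint $v_0-u_0\le -B_{u_0v_0}$ together with its coherent mirror, and use the strong-closure inequalities to show the modified DBM still meets the non-emptiness criterion; any regular environment of that DBM then attains the bound exactly. Non-emptiness itself is the easy part and your scheme does recover it: symmetrizing any $\sigma\in\gamma_{pot}(B)$ via $\rho(w)=\tfrac12(\sigma(w)-\sigma(\overline w))$ already yields an element of $\gamma_{oct}(B)$, using coherence alone.
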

\begin{proof} See, e.g.,~\cite[Theorems 4.3.2 and 4.3.3]{mine2004phd}.
\end{proof}

Usually~\cite{bagnara2009weakly}, to compute strong closure, one first
ensures that the given matrix is coherent, then computes a closure
(i.e., a canonical representative in the sense of $\gamma_{pot}$),
and, finally, performing a so-called \emph{strengthening
  step}\footnote{This is actually an improvement of the method
  described initially by Miné~\cite{mine2004phd}.}:

\begin{definition}[Strengthening]
  Let $B$ be a DBM. The \emph{strengthening} of $B$, noted
  $\lsharp\mathcal S(B)$ is defined by:
  \begin{equation*}
    \lsharp\mathcal S(B)_{uv} = \min\left\{\frac{B_{u\overline
        u}+B_{\overline vv}}{2}; B_{uv}\right\}
  \end{equation*}
\end{definition}

The following theorem states the correctness of the strong closure
algorithm sketched above, consisting in computing a closure followed
by a strengthening:

\begin{theorem}
  \label{thm:octstrenghtcomplete}
  Let $B$ be a coherent DBM with $\gamma_{oct}(B) \neq
  \emptyset$. Then:
\begin{equation*}
  \alpha(\gamma_{oct}(B)) = \lsharp\mathcal S(\alpha(\gamma_{pot}(B)))
\end{equation*}
In particular, if $B$ is coherent and closed, then $\lsharp\mathcal
S(B)$ is strongly closed.
\end{theorem}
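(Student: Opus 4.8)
The plan is to follow the route prescribed by the statement and verify it lands exactly on the best $\gamma_{oct}$-abstraction. Write $C = \alpha(\gamma_{pot}(B))$; since $\gamma_{oct}(B)\neq\emptyset$ forces $\gamma_{pot}(B)\neq\emptyset$, this is a genuine (closed) DBM, and write $D = \lsharp\mathcal S(C)$, noting $D\dbmleq C$ because $\lsharp\mathcal S$ only lowers entries. It is enough to prove (a) $\gamma_{oct}(D)=\gamma_{oct}(B)$ and (b) $D$ is strongly closed: \thmref{thm:octstrclosed} applied to $D$ then gives $D=\alpha(\gamma_{oct}(D))=\alpha(\gamma_{oct}(B))$, which is the claim.

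For (a), I would first note that $C$ is coherent: when $B$ is coherent the involution $\sigma\mapsto(v\mapsto-\sigma(\overline v))$ is a bijection of $\gamma_{pot}(B)$ onto itself, and substituting it into the supremum defining $\alpha$ yields $C_{uv}=C_{\overline v\,\overline u}$. Next, $\gamma_{pot}(C)=\gamma_{pot}(B)$ --- the usual property of the Galois connection $(\alpha,\gamma_{pot})$ --- hence $\gamma_{oct}(C)=\gamma_{oct}(B)$ by the definition of $\gamma_{oct}$. Finally $\gamma_{oct}(D)=\gamma_{oct}(C)$: ``$\subseteq$'' is Lemma~\ref{lem:octgammamonot} applied to $D\dbmleq C$; for ``$\supseteq$'', any $\rho\in\gamma_{oct}(C)$ is regular, so $2\rho(u)=\rho(u)-\rho(\overline u)\le C_{u\overline u}$ and $-2\rho(v)=\rho(\overline v)-\rho(v)\le C_{\overline v v}$, and the half-sum of these, together with $\rho(u)-\rho(v)\le C_{uv}$, gives $\rho(u)-\rho(v)\le D_{uv}$; thus $\rho\in\gamma_{pot}(D)$ and, being regular, $\rho\in\gamma_{oct}(D)$.

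For (b), coherence of $D$ is immediate from that of $C$; $D_{vv}=0$ holds because closedness of $C$ gives $0=C_{vv}\le C_{v\overline v}+C_{\overline v v}$, so the halved term in $\lsharp\mathcal S$ is nonnegative; and the strong-closedness inequality $D_{uv}\le(D_{u\overline u}+D_{\overline v v})/2$ is essentially free, since $\overline{\overline u}=u$ makes $D_{u\overline u}=C_{u\overline u}$ and $D_{\overline v v}=C_{\overline v v}$, so the right-hand side is precisely one of the two quantities whose minimum defines $D_{uv}$. The remaining point, which I expect to be the \emph{main obstacle}, is the triangle inequality $D_{uw}\le D_{uv}+D_{vw}$. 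I would case-split on which of the two terms $\lsharp\mathcal S$ picks for $D_{uv}$ and for $D_{vw}$: when both are the ``raw'' entries, the triangle inequality of $C$ suffices; when both are halved interval bounds, the needed slack is $C_{\overline v v}+C_{v\overline v}\ge C_{\overline v\,\overline v}=0$; the two mixed cases are the delicate ones and are discharged with $C_{u\overline u}\le C_{uv}+C_{v\overline u}\le C_{uv}+C_{v\overline v}+C_{\overline v\,\overline u}=2C_{uv}+C_{v\overline v}$ (triangle inequality plus coherence of $C$), and its mirror $C_{\overline w w}\le 2C_{vw}+C_{\overline v v}$.

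With (a) and (b) established the main equality follows as above. For the ``in particular'' clause, if $B$ is already coherent and closed then $B=\alpha(\gamma_{pot}(B))=C$ (closed DBMs being exactly the best $\gamma_{pot}$-abstractions), so $\lsharp\mathcal S(B)=D$ is strongly closed by part (b) alone; no emptiness hypothesis is needed here, since (b) was derived directly from closedness and coherence of $C$.
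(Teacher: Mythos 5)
Your proof is correct, and since the paper only cites an external reference for this result, it fills a real gap: the decomposition into (a) preservation of $\gamma_{oct}$ via the half-sum of the unary constraints $2\rho(u)\le C_{u\overline u}$, $-2\rho(v)\le C_{\overline v v}$, and (b) strong closedness of $\lsharp\mathcal S(C)$ via the four-way case analysis on the $\min$, using $C_{u\overline u}\le 2C_{uv}+C_{v\overline v}$ for the mixed cases, is exactly the standard argument (Miné, Bagnara et al.) that the cited reference follows. All steps check out, including the observations that $D_{u\overline u}=C_{u\overline u}$ makes the strengthening inequality automatic and that the ``in particular'' clause needs only closedness and coherence.
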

\begin{proof} See, e.g.,~\cite[Theorem 8.2.7]{jourdan2016phd}.
\end{proof}

\begin{example}
  \label{ex:strclos}
  In order to consider the strong closure of the DBM $A$ as defined in
  \exref{ex:firstex}, we first need to make it coherent: let
  $\tilde{A}$ be the DBM containing the same entries as $A$, except
  that $\tilde{A}_{z \overline y} = 1$.

  The closure of $\tilde{A}$ contains the following addional finite
  entries:
  \begin{itemize}
  \item $\forall u \in \varpm,\;\alpha(\gamma_{pot}(\tilde{A}))_{uu} = 0$
  \item $\alpha(\gamma_{pot}(\tilde{A}))_{zy} = \alpha(\gamma_{pot}(A))_{\overline y\, \overline z} = 4$ (corresponding to the constraint $\rho(z)-\rho(y) \leq 4$)
  \item $\alpha(\gamma_{pot}(\tilde{A}))_{z\overline z} = 5$ (corresponding to the constraint $\rho(z) \leq 5/2$).
  \end{itemize}
  The strong closure $\alpha(\gamma_{oct}(\tilde{A}))$ is then
  obtained by strengthening $\alpha(\gamma_{pot}(\tilde{A}))$. The
  strengthening operation creates the following new entries:
  \begin{itemize}
  \item $\alpha(\gamma_{oct}(\tilde{A}))_{xy} = \alpha(\gamma_{oct}(\tilde{A}))_{\overline y\, \overline x} = 2$
  \item $\alpha(\gamma_{oct}(\tilde{A}))_{x\overline z} = \alpha(\gamma_{oct}(\tilde{A}))_{z\overline x} = 3$.
  \end{itemize}
\end{example}

\subsection{Weak Closedness}

Usually, the implementations of the Octagon abstract domain maintain
all DBMs strongly closed, so that maximal information is known when
performing an abstract operation. However, this breaks sparsity:
indeed, matrix elements of the form~$B_{u\overline u}$ are
non-relational interval bounds on the variables: as we expect many
variables to be bounded, the strengthening step gives finite bounds
for many DBM cells, and a strengthened DBM loses most of the
sparsity. In general, a DBM has a quadratic size in the number of
variables, and therefore this loss of sparsity is costly. Previous
attempts at improving performances using
sparsity~\cite{singh2015making} did not make this observation. We
believe that, when using these implementations, DBMs quickly become
dense, hence reducing the efficiency of sparse algorithms.

In our algorithms, we propose to skip the strengthening step: instead
of maintaining the invariant that all the manipulated DBMs are
strongly closed, we maintain the invariant that they are \emph{weakly}
closed:

\begin{definition}[Weakly closed DBM]
  \label{oct:weakclosed}
  Let $B$ be a DBM. We say that $B$ is \emph{weakly closed} when any
  of the two following \emph{equivalent} statements hold:
  \begin{enumerate}[(i)]
  \item $B$ has a null diagonal and $\lsharp\mathcal S(B)$ is strongly closed;
  \item $B$ has a null diagonal, $\lsharp\mathcal S(B)$ is coherent, and:
    \begin{equation}
      \label{eq:octweackclos}
      \forall u v w,\;\lsharp\mathcal S(B)_{uw} \leq B_{uv} + B_{vw}
    \end{equation}
  \end{enumerate}
\end{definition}
\begin{proof} The proof of equivalence of the definitions is in~\cite[Definition 8.2.5]{jourdan2016phd}.
\end{proof}

In order to make sure we do not lose precision, we will prove for each
of those operators that it computes abstract values with the same
concretization as with the usual algorithms. Equivalently, we prove
that the strengthening of the abstract values computed by our
operators are equal to the abstract values computed by the usual
operators on the strengthened parameters.

A weakly closed DBM is neither necessarily strongly closed nor
closed. However, a closed and coherent DBM is always weakly closed:
this helps us easily building weakly closed DBMs from arbitrary sets
of octagonal constraints.

\begin{example}
  Continuing on the definitions of \exref{ex:strclos},
  $\alpha(\gamma_{pot}(\tilde{A}))$ is closed and coherent hence
  weakly closed. This DBM contains no entry relating the variable $x$
  and the other variables. This is an improvement in sparsity compared
  to the strong closure $\alpha(\gamma_{oct}(\tilde{A}))$. To the best
  of our knowledge, this opportunity is not leveraged by previously
  known algorithms, such as~\cite{singh2015making}.
\end{example}

This notion of weak closedness has been introduced by Bagnara et
al.~\cite[Appendix A]{bagnara2009weakly} as an intermediate notion for
proving the correctness of the tight closure algorithm (see
\secref{sec:octTighten}). To the best of our knowledge, the use of
weak closedness as an invariant for manipulating sparse DBMs is an
original result of our work.

\section{Operations on Difference Bound Matrices}
\label{sec:dbmops}

The abstract domain of octagons defines several operations
manipulating difference bound matrices. They include lattice
operations, like comparison and join, and abstract transfer functions,
which model state change in the program.

In this section, we recall the standard definition of these
operations, and give the new sparsity-preserving definition on weakly
closed DBMs. All these algorithms preserve the sparsity and weak
closedness of DBMs and can be proved to be as precise as the standard
ones. More precisely, we claim that they always return DBMs whose
strengthening equals the DBMs that would have been returned by the
traditional algorithms. The implementation of the widening operation,
detailed in~\cite[Section 8.2.7]{jourdan2016phd}, is more complex and
omitted by lack of space.

\subsection{Comparison}
\label{sec:octwcomp}

In order to use octagons in a static analyzer, we need to define a
comparison operator, taking two DBMs and returning a Boolean. If this
Boolean is \verb+true+, then we have the guarantee that the
concretization of the first operand is included in that of the second
operand.

A good candidate is $\dbmleq$, the natural order relation between
DBMs. Its soundness is guaranteed by the monotonicity of
$\gamma_{oct}$. In usual implementations of the Octagon abstract
domain, DBMs are kept strongly closed, hence this operator is actually
as precise as possible: it returns \verb+true+ if and only if the
concretizations are included.

However, in the setting of weakly closed DBMs, this property does not
hold. In order not to lose precision while still using sparse DBM, we
need another comparison operator that strengthens the bounds of the
left operand when they do not entail the right operand:

\begin{definition}[Weakly closed comparison]
  Let $A$ and $B$ be DBMs. The weakly closed comparison of $A$ and
  $B$, noted \mbox{$A \dbmleq_{weak} B$} is defined by:
  \begin{equation*}
    A \dbmleq_{weak} B
    \equiv
    \bigwedge_{\substack{u,v \in \varpm \\ B_{uv} < +\infty}}
    A_{uv} \leq B_{uv} \;\vee\; \frac{A_{u\overline u} + A_{\overline vv}}{2} \leq B_{uv}
  \end{equation*}
\end{definition}

That is, for every finite bound on $B$, we first check whether it is
directly entailed by the corresponding bound in $A$, and then try to
entail it using non-relational bounds. The following theorem states
that it implements the comparison on concretizations, hence we can use
it in a sparse context without losing precision:

\begin{theorem}
  Let $A$ a weakly closed DBM and $B$ any DBM. The two following
  statements are equivalent:
  \begin{enumerate}[(i)]
  \item $\gamma_{oct}(A) \subseteq \gamma_{oct}(B)$
  \item $A \dbmleq_{weak} B$
  \end{enumerate}
\end{theorem}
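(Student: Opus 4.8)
The plan is to exploit the equivalence, stated in \defref{oct:weakclosed} and in \thmref{thm:octstrclosed}, between reasoning about $\gamma_{oct}(A)$ and reasoning about the strongly closed DBM $\lsharp\mathcal S(A)$. Since $A$ is weakly closed, $\lsharp\mathcal S(A)$ is strongly closed, so $\gamma_{oct}(\lsharp\mathcal S(A)) = \gamma_{oct}(A)$ and $\alpha(\gamma_{oct}(A)) = \lsharp\mathcal S(A)$. The key observation is that for a strongly closed DBM $C = \lsharp\mathcal S(A)$, the natural order already coincides with concretization inclusion: $\gamma_{oct}(C) \subseteq \gamma_{oct}(B)$ iff $C \dbmleq B$. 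Indeed, if $C \dbmleq B$ then $\gamma_{oct}$ monotonicity (\lemref{lem:octgammamonot}) gives the inclusion; conversely, if $\gamma_{oct}(C) \subseteq \gamma_{oct}(B)$, then taking $\alpha$ of both sides and using that $\alpha$ is increasing and that $\alpha(\gamma_{oct}(C)) = C$ while $\alpha(\gamma_{oct}(B)) \dbmleq B$ (every DBM over-approximates its own octagonal abstraction) yields $C \dbmleq B$. So the theorem reduces to showing
\begin{equation*}
  \lsharp\mathcal S(A) \dbmleq B \quad\Longleftrightarrow\quad A \dbmleq_{weak} B.
\end{equation*}

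First I would unfold both sides cell by cell, restricting attention to pairs $uv$ with $B_{uv} < +\infty$ (when $B_{uv} = +\infty$ both sides are trivially satisfied at that cell). On the left, $\lsharp\mathcal S(A)_{uv} \leq B_{uv}$ means, by definition of strengthening, $\min\{(A_{u\overline u} + A_{\overline v v})/2,\ A_{uv}\} \leq B_{uv}$, which is exactly the disjunction $A_{uv} \leq B_{uv} \vee (A_{u\overline u} + A_{\overline v v})/2 \leq B_{uv}$ — i.e. precisely the $uv$-conjunct of $A \dbmleq_{weak} B$. Thus the two statements match conjunct by conjunct, and the equivalence is essentially immediate once the reduction above is in place.

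I expect the main obstacle to be the direction of the reduction asserting that $\gamma_{oct}(A) \subseteq \gamma_{oct}(B)$ implies $\lsharp\mathcal S(A) \dbmleq B$, since this is where one must actually use that $A$ (equivalently $\lsharp\mathcal S(A)$) is canonical: one needs the fact that applying $\alpha$ to the inclusion recovers $\lsharp\mathcal S(A)$ exactly on the left while only bounding $B$ from below on the right. Concretely, from $\gamma_{oct}(\lsharp\mathcal S(A)) = \gamma_{oct}(A) \subseteq \gamma_{oct}(B)$ and monotonicity of $\alpha$ we get $\lsharp\mathcal S(A) = \alpha(\gamma_{oct}(\lsharp\mathcal S(A))) \dbmleq \alpha(\gamma_{oct}(B))$, and $\alpha(\gamma_{oct}(B)) \dbmleq B$ holds for any DBM $B$ (immediate from the definitions of $\alpha$ and $\gamma_{oct}$, noting $\gamma_{oct}(B) \subseteq \gamma_{pot}(B)$ so $\alpha(\gamma_{oct}(B))_{uv} = \sup_{\rho \in \gamma_{oct}(B)}(\rho(u) - \rho(v)) \leq B_{uv}$ — with the convention that the $\sup$ over the empty set is $-\infty \le B_{uv}$, which also covers the degenerate case $\gamma_{oct}(B) = \emptyset$, consistent with $\gamma_{oct}(A)$ being then empty too, forcing $A = \mathord{\bot}$, a case we would need to treat as excluded or vacuous). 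Chaining gives $\lsharp\mathcal S(A) \dbmleq B$. The converse direction and the cell-by-cell unfolding are routine.
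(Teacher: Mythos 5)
Your proof is correct, and it follows the natural route: reduce the statement to $\lsharp\mathcal S(A) \dbmleq B$ by using that $\lsharp\mathcal S(A)$ is strongly closed (hence equals $\alpha(\gamma_{oct}(A))$ by \thmref{thm:octstrclosed}) together with the Galois-connection inequality $\alpha(\gamma_{oct}(B)) \dbmleq B$, and then observe that $\lsharp\mathcal S(A) \dbmleq B$ unfolds cell by cell into exactly the definition of $\dbmleq_{weak}$, since $\min\{x;y\}\leq z$ iff $x\leq z \vee y\leq z$. The paper itself defers the proof to the author's thesis, so there is nothing to compare against line by line, but this is the intended argument and it is complete. Two small points of presentation: the identity $\gamma_{oct}(\lsharp\mathcal S(A)) = \gamma_{oct}(A)$ is not a consequence of $\lsharp\mathcal S(A)$ being strongly closed, as your phrasing suggests --- it holds for \emph{every} DBM, because for a regular environment the strengthened bound $(A_{u\overline u}+A_{\overline vv})/2$ is already implied by $2\rho(u)\leq A_{u\overline u}$ and $-2\rho(v)\leq A_{\overline vv}$; state this as a separate (easy) fact. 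And the digression about $\gamma_{oct}(B)=\emptyset$ is unnecessary: weak closedness of $A$ gives $\gamma_{oct}(A)=\gamma_{oct}(\lsharp\mathcal S(A))\neq\emptyset$ by \thmref{thm:octstrclosed}, so the assumed inclusion forces $\gamma_{oct}(B)\neq\emptyset$ and $\alpha$ is only ever applied to non-empty sets.
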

\begin{proof} See, e.g.,~\cite[Theorem 8.2.9]{jourdan2016phd}.
\end{proof}

\subsection{Forgetting Variables}
\label{sec:octwforg}

An important operation provided by the Octagon abstract domain is
\verb+forget+. When given a DBM and a variable $v$, it returns another
DBM where all the information on $v$ has been forgotten. Its concrete
and abstract definitions are given by:

\begin{definition}[Concrete forgetting]
  \label{def:octConcForget}
  Let $x\in\varplus$ and $S$ be a set of regular environments. We
  define:
  \begin{equation*}
    \mathcal F^x_{oct}(S) = \{ \sigma + [x \Rightarrow r ; \overline x
    \Rightarrow -r] \mid \sigma \in S, r \in \mathbb R \}
  \end{equation*}
\end{definition}

\begin{definition}[Abstract forgetting]
  \label{def:octAbsForget}
  \begin{enumerate}
  \item Let $x\in\varpm$ and $B$ be a DBM. We define $\lsharp\mathcal
    F^x_{pot}(B)$ the DBM such that:
    \begin{equation*}
      \lsharp\mathcal F^x_{pot}(B)_{uv} =
      \begin{cases}
        0 & \text{if $u = v = x$} \\
        +\infty & \text{otherwise if $u = x$ or $v = x$} \\
        B_{uv} & \text{otherwise}
      \end{cases}
    \end{equation*}
  \item Let $x\in\varplus$ and $B$ a DBM. We define:
    \begin{equation*}
      \lsharp\mathcal F^x_{oct}(B) = \lsharp\mathcal F^x_{pot}(\lsharp\mathcal F^{\overline x}_{pot}(B))
    \end{equation*}
  \end{enumerate}
\end{definition}

It is a known result from the Octagon literature~\cite[Theorems 3.6.1
and 4.4.2]{mine2004phd} that $\lsharp\mathcal F^x_{oct}$ is sound when
applied to any DBM. Moreover, when applied to any strongly closed DBM,
it is exact and returns a strongly closed DBM. To these properties, we
add similar properties for weak closedness, that let us use
$\lsharp\mathcal F^x_{oct}$ as-is for weakly closed DBMs without loss
of precision:

\begin{theorem}
  Let $B$ be a weakly closed DBM and $x \in \varplus$. We have:
  \begin{enumerate}
  \item $\lsharp\mathcal S(\lsharp\mathcal F^x_{oct}(B)) = \lsharp\mathcal F^x_{oct}(\lsharp\mathcal S(B))$
  \item $\mathcal F^x_{oct}(\gamma_{oct}(B)) = \gamma_{oct}(\lsharp\mathcal F^x_{oct}(B))$
  \item $\lsharp\mathcal F^x_{oct}(B)$ is weakly closed
  \end{enumerate}
\end{theorem}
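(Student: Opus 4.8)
The plan is to establish the three statements in order, using statement (1) to bootstrap the other two. For statement (1), I would unfold the definition of $\lsharp\mathcal F^x_{oct}$ as the composition $\lsharp\mathcal F^x_{pot} \circ \lsharp\mathcal F^{\overline x}_{pot}$ and the definition of $\lsharp\mathcal S$ entry by entry. The key observation is that $\lsharp\mathcal F^x_{oct}(B)_{uv}$ differs from $B_{uv}$ only when $u$ or $v$ belongs to $\{x,\overline x\}$, in which case it is $+\infty$ (or $0$ on the diagonal). So I would check the identity $\lsharp\mathcal S(\lsharp\mathcal F^x_{oct}(B))_{uv} = \lsharp\mathcal F^x_{oct}(\lsharp\mathcal S(B))_{uv}$ by cases on whether $u,v,\overline u,\overline v$ hit $\{x,\overline x\}$. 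When neither $u$ nor $v$ is in $\{x,\overline x\}$, both sides compute $\min\{B_{uv}, (B_{u\overline u}+B_{\overline v v})/2\}$ using entries untouched by the forgetting, since $u\overline u$ and $\overline v v$ are also disjoint from $\{x,\overline x\}$; when $u$ or $v$ is in $\{x,\overline x\}$, both sides are $+\infty$ off-diagonal and $0$ on the diagonal. This is a routine but slightly tedious case split; crucially it requires no closedness hypothesis on $B$ at all.

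For statement (2), I would combine statement (1) with the known exactness of concrete forgetting on strongly closed DBMs (cited from Miné, Theorems 3.6.1 and 4.4.2) and with \thmref{thm:octstrclosed}. Since $B$ is weakly closed, $\lsharp\mathcal S(B)$ is strongly closed, so $\gamma_{oct}(\lsharp\mathcal S(B)) = \gamma_{oct}(B)$ — here I need the auxiliary fact that strengthening does not change the $\gamma_{oct}$-concretization, which follows because $\gamma_{oct}(\lsharp\mathcal S(B)) \subseteq \gamma_{oct}(B)$ by monotonicity ($\lsharp\mathcal S(B) \dbmleq B$) and $\supseteq$ because every regular environment in $\gamma_{oct}(B)$ automatically satisfies the strengthened constraints $\rho(u)-\rho(v) \le (\rho(u)-\rho(\overline u) + \rho(\overline v)-\rho(v))/2$. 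Then $\mathcal F^x_{oct}(\gamma_{oct}(B)) = \mathcal F^x_{oct}(\gamma_{oct}(\lsharp\mathcal S(B))) = \gamma_{oct}(\lsharp\mathcal F^x_{oct}(\lsharp\mathcal S(B)))$ by exactness on the strongly closed DBM $\lsharp\mathcal S(B)$, and this equals $\gamma_{oct}(\lsharp\mathcal S(\lsharp\mathcal F^x_{oct}(B)))$ by statement (1), which in turn equals $\gamma_{oct}(\lsharp\mathcal F^x_{oct}(B))$ by the same strengthening-invariance of $\gamma_{oct}$ applied to $\lsharp\mathcal F^x_{oct}(B)$.

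For statement (3), I would use the characterization in \defref{oct:weakclosed}(i): I must show $\lsharp\mathcal F^x_{oct}(B)$ has a null diagonal and that $\lsharp\mathcal S(\lsharp\mathcal F^x_{oct}(B))$ is strongly closed. The null diagonal is immediate from the definition of $\lsharp\mathcal F^x_{pot}$. For the rest, statement (1) rewrites $\lsharp\mathcal S(\lsharp\mathcal F^x_{oct}(B))$ as $\lsharp\mathcal F^x_{oct}(\lsharp\mathcal S(B))$; since $B$ is weakly closed, $\lsharp\mathcal S(B)$ is strongly closed, and then I invoke the known result that $\lsharp\mathcal F^x_{oct}$ applied to a strongly closed DBM returns a strongly closed DBM. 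So statement (3) is essentially a corollary of (1) plus the cited Miné results.

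The main obstacle will be statement (1): it is the one genuinely new computation, and care is needed to get the case analysis exhaustive and correct — in particular verifying that the indices $u\overline u$ and $\overline v v$ appearing inside the strengthening formula are left untouched by forgetting $x$ exactly when $u,v$ themselves are, and handling the coincidences $u = \overline v$, $u = x$ with $v = \overline x$, etc. Everything downstream reduces cleanly to (1) together with already-established or cited facts.
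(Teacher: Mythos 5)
Your proposal is correct and follows the natural decomposition: the commutation identity (1) is indeed a pure entry-by-entry computation valid for arbitrary DBMs (the key point, which you identify, being that $\{x,\overline x\}$ is stable under the bar operator, so the strengthening indices $u\overline u$ and $\overline v\,v$ are affected by the forget exactly when the index $uv$ is, and any affected off-diagonal entry makes both sides $+\infty$), and (2) and (3) then follow from (1) combined with the invariance of $\gamma_{oct}$ under $\lsharp\mathcal S$ and the cited exactness and strong-closedness results for $\lsharp\mathcal F^x_{oct}$ on strongly closed DBMs. The paper itself gives no in-line proof (it defers to Theorem 8.2.11 of the author's thesis), but your route is the expected one and all the auxiliary facts you invoke are either proved correctly in your sketch or explicitly available in the paper.
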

\begin{proof} See~\cite[Theorem 8.2.11]{jourdan2016phd}.
\end{proof}

\subsection{Join}
\label{sec:octwjoin}

The usual join operator on DBMs is the least upper bound operator
for~$\dbmleq$:

\newcommand\dbmjoin{\mathbin{\lsharp{\,\cup}}}

\begin{definition}[DBM least upper bound]
  Let $A$ and $B$ be two DBMs. The least upper bound
  $\dbmjoin$ on DBMs is defined by:
  \begin{equation*}
    \forall uv,\;(A\dbmjoin B)_{uv} = \max\{A_{uv}\;;\;B_{uv}\}
  \end{equation*}
\end{definition}

The order relation $\dbmleq$ and the operator $\dbmjoin$ clearly form
an upper semi-lattice, thus usual properties on Galois connections
hold, providing the usual results on the soundness and precision of
this operator: $\dbmjoin$ is sound, and, if given strongly closed
DBMs, it returns the best strongly closed DBM approximating the
concrete union.

For weakly closed DBMs, even though $\dbmjoin$ is sound, it would
possibly lose precision when applied to non-strongly closed DBMs. For
example, the weakly closed DBM $A$ represents the two following
inequalities on positive variables $x$ and $y$:
\begin{align*}
  x+x &\leq 1 & y+y &\leq 0
\end{align*}
The weakly closed DBM $B$, in turn, represents the two following
inequalities:
\begin{align*}
  x+x &\leq 0 & y+y &\leq 1
\end{align*}
The inequality $x+y \leq 1/2$ is not present in $A$ nor in $B$, even
though it is in $\lsharp\mathcal S(A)$ and in $\lsharp\mathcal
S(B)$. As a result, $A \dbmjoin B$ contains the inequalities $x+x \leq
1$ and $y+y \leq 1$, but does not entail $x+y \leq 1/2$, which is
entailed however by $\lsharp\mathcal S(A) \dbmjoin \lsharp\mathcal
S(B)$.

The rationale behind this example is that a join can create some
amount of relationality that was not present in one or both
operands. Our operator has to reflect this fact. Care should be taken,
however, not to break the sparsity of the operands by introducing
spurious finite values in the matrix. Our join for weakly closed DBMs
is defined as follows:

\begin{definition}[Weakly closed join for octagons]
  Let $A$ and $B$ be two weakly closed DBMs. We take, for $u, v \in
  \varpm$, $B^{1/2}_{uv} = \frac{B_{u\overline u}+B_{\overline
      vv}}{2}$ and $A^{1/2}_{uv} = \frac{A_{u\overline u}+A_{\overline
      vv}}{2}$.  The weakly closed join $\dbmjoin_{weak}$ is defined
  in two steps:
  \begin{enumerate}
  \item We first define $A \dbmjoin^0_{weak} B$. Let $u, v \in
    \varpm$. We define:
    \begin{equation*}
      (A \dbmjoin^0_{weak} B)_{uv} =
      \begin{cases}
        A_{uv} & \text{if $A_{uv} = B_{uv}$} \\
        B_{uv} & \text{if $A_{uv} < B_{uv} \leq B^{1/2}_{uv}$} \\
        \max\{ A_{uv}\;;\; B^{1/2}_{uv} \} &
        \text{if $A_{uv} < B_{uv} \wedge B^{1/2}_{uv} < B_{uv}$} \\
        (B \dbmjoin^0_{weak} A)_{uv} & \text{if $A_{uv} > B_{uv}$}
      \end{cases}
    \end{equation*}
  \item Let $u, v \in \varpm$. We define:
    \begin{equation*}
      (A \dbmjoin_{weak} B)_{uv} =
      \begin{cases}
        \min\left\{
        \begin{matrix}
          (A \dbmjoin^0_{weak} B)_{uv} \\
        \max\left\{A^{1/2}_{uv} \;;\; B^{1/2}_{uv}\right\}
        \end{matrix}
        \right\}
        & \begin{matrix}
           \text{if }A_{u\overline u} <
              B_{u\overline u} \wedge A_{\overline vv} >
              B_{\overline vv} \\
            \text{or }A_{u\overline u} >
              B_{u\overline u} \wedge A_{\overline vv} <
              B_{\overline vv}
          \end{matrix} \\
        (A \dbmjoin^0_{weak} B)_{uv} & \text{otherwise}
      \end{cases}
    \end{equation*}
  \end{enumerate}
\end{definition}

The first step can be computed by iterating over all the matrix
elements that are different in $A$ and $B$. This first step thus
preserves the sparsity, and consumes computing time only for variables
that are different in both branches. The second step can be computed
efficiently by first collecting in a list all the variables $u$ for
which $A_{u\overline u} < B_{u\overline u}$ and, in another list, all
those for which $B_{u\overline u} < A_{u\overline u}$. By iterating
over the two lists, we can efficiently modify only the cells meeting
the given condition. It should be noted that we break in the second
step only the sparsity that needs to be broken, as the modified cells
correspond to the cases where the join create new relational
information (as in the example above).

The following theorem states that this modified join operator can be
used on weakly closed DBMs without losing precision or soundness:

\begin{theorem}
\label{thm:octjoincorrect}
Let $A$ and $B$ be two weakly closed DBMs. We have:
\begin{enumerate}
\item $\lsharp\mathcal S(A \dbmjoin_{weak} B) = \lsharp\mathcal S(A) \dbmjoin \lsharp\mathcal S(B)$
\item $\gamma_{oct}(A \dbmjoin_{weak} B) = \gamma_{oct}(\alpha(\gamma_{oct}(A) \cup  \gamma_{oct}(B)))$
\item $A \dbmjoin_{weak} B$ is weakly closed
\end{enumerate}
\end{theorem}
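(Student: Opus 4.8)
The plan is to obtain statements~2 and~3 as corollaries of statement~1, which is the heart of the matter; I would prove~1 after collecting a handful of elementary facts. Those facts are: (a) for every DBM $C$, $\gamma_{oct}(C) = \gamma_{oct}(\lsharp\mathcal S(C))$ — ``$\supseteq$'' from $\lsharp\mathcal S(C)\dbmleq C$ and monotonicity of $\gamma_{oct}$ (Lemma~\ref{lem:octgammamonot}), and ``$\subseteq$'' because a regular environment satisfying $C$ also satisfies $\rho(u)-\rho(v)\le\frac{C_{u\overline u}+C_{\overline vv}}{2}$; (b) $\dbmjoin$ preserves strong closedness, by a pointwise verification of the three defining inequalities using $\max\{x_1,y_1\}+\max\{x_2,y_2\}\ge\max\{x_1+x_2,\,y_1+y_2\}$; (c) $\lsharp\mathcal S$ is monotone for $\dbmleq$, fixes every entry $C_{u\overline u}$, and hence acts as the identity on strongly closed DBMs; (d) $\alpha(S_1\cup S_2) = \alpha(S_1)\dbmjoin\alpha(S_2)$, directly from $\alpha(S)_{uv}=\sup_{\sigma\in S}\{\sigma(u)-\sigma(v)\}$; (e) since $A,B$ are weakly closed, $\lsharp\mathcal S(A),\lsharp\mathcal S(B)$ are strongly closed, so by \thmref{thm:octstrclosed} and~(a), $\lsharp\mathcal S(A)=\alpha(\gamma_{oct}(A))$ and $\lsharp\mathcal S(B)=\alpha(\gamma_{oct}(B))$.

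Granting statement~1, statement~3 follows: $\lsharp\mathcal S(A\dbmjoin_{weak}B)=\lsharp\mathcal S(A)\dbmjoin\lsharp\mathcal S(B)$ is strongly closed by~(b), and one checks directly that $A\dbmjoin_{weak}B$ has a null diagonal (for $u=v$ the first step returns $0$; if the second step fires it returns $\min\{0,\max\{A^{1/2}_{uu},B^{1/2}_{uu}\}\}$, still $0$ because $\lsharp\mathcal S(A)_{uu}=\lsharp\mathcal S(B)_{uu}=0$ forces $A^{1/2}_{uu},B^{1/2}_{uu}\ge 0$), so $A\dbmjoin_{weak}B$ is weakly closed by \defref{oct:weakclosed}(i). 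Statement~2 then follows from the chain $\gamma_{oct}(A\dbmjoin_{weak}B) = \gamma_{oct}(\lsharp\mathcal S(A\dbmjoin_{weak}B)) = \gamma_{oct}(\lsharp\mathcal S(A)\dbmjoin\lsharp\mathcal S(B)) = \gamma_{oct}(\alpha(\gamma_{oct}(A))\dbmjoin\alpha(\gamma_{oct}(B))) = \gamma_{oct}(\alpha(\gamma_{oct}(A)\cup\gamma_{oct}(B)))$, using (a), statement~1, (e), and (d).

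For statement~1 I would argue entrywise. The first observation is that the ``crossing'' condition controlling the second step of $\dbmjoin_{weak}$ is vacuously false on entries of the form $(u,\overline u)$ (each disjunct becomes a contradiction), so a short inspection of the first step gives $(A\dbmjoin_{weak}B)_{u\overline u}=\max\{A_{u\overline u},B_{u\overline u}\}$ for all $u$, and symmetrically $(A\dbmjoin_{weak}B)_{\overline vv}=\max\{A_{\overline vv},B_{\overline vv}\}$. Writing $J^0 = A\dbmjoin^0_{weak}B$, I would then establish the uniform formula
\[
  \lsharp\mathcal S(A\dbmjoin_{weak}B)_{uv} = \min\bigl\{\,J^0_{uv}\;;\;\max\{A^{1/2}_{uv}\,;\,B^{1/2}_{uv}\}\,\bigr\}
\]
by splitting on whether the crossing condition holds at $(u,v)$: if it holds, the second step already sets $(A\dbmjoin_{weak}B)_{uv}$ to the right-hand side above, and the ensuing strengthening by $\tfrac12(\max\{A_{u\overline u},B_{u\overline u}\}+\max\{A_{\overline vv},B_{\overline vv}\})$ is inactive since that quantity always dominates $\max\{A^{1/2}_{uv};B^{1/2}_{uv}\}$; if it does not hold, then $(A\dbmjoin_{weak}B)_{uv}=J^0_{uv}$ and, the two comparisons being uncrossed, $\tfrac12(\max\{A_{u\overline u},B_{u\overline u}\}+\max\{A_{\overline vv},B_{\overline vv}\})$ equals exactly $\max\{A^{1/2}_{uv};B^{1/2}_{uv}\}$. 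It remains to prove the elementary identity
\[
  \min\bigl\{\,J^0_{uv}\;;\;\max\{A^{1/2}_{uv}\,;\,B^{1/2}_{uv}\}\,\bigr\} = \max\bigl\{\,\lsharp\mathcal S(A)_{uv}\;;\;\lsharp\mathcal S(B)_{uv}\,\bigr\}.
\]
Both sides being symmetric in $A,B$, I may assume $A_{uv}\le B_{uv}$, whence $J^0_{uv}=\max\{A_{uv};\lsharp\mathcal S(B)_{uv}\}$ by a case analysis on the first step; then, using distributivity of $\min$ over $\max$ in the chain $\mathbb R\cup\{+\infty\}$ together with $\lsharp\mathcal S(B)_{uv}\le B^{1/2}_{uv}$, the identity reduces to $\min\{A_{uv};\max\{A^{1/2}_{uv};B^{1/2}_{uv}\}\}\le\max\{\min\{A_{uv};A^{1/2}_{uv}\};\lsharp\mathcal S(B)_{uv}\}$, which I would settle by splitting on the sign of $A^{1/2}_{uv}-B^{1/2}_{uv}$ and on whether $A_{uv}\le B^{1/2}_{uv}$, the impossible combinations being excluded by $A_{uv}\le B_{uv}$.

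I expect the main obstacle to be exactly this last part of statement~1: it involves no deep idea but a finite, slightly delicate case analysis, and it requires getting the two-step bookkeeping of $\dbmjoin_{weak}$ precisely right — in particular the vacuity of the crossing condition on $(u,\overline u)$ entries, which is what makes the uniform formula for $\lsharp\mathcal S(A\dbmjoin_{weak}B)$ hold. Everything else (the preliminary facts, the reductions, and the half of the uniform formula that amounts to $\lsharp\mathcal S(A),\lsharp\mathcal S(B)\dbmleq A\dbmjoin_{weak}B$) is routine.
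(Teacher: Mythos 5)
Your proposal is correct, and it follows the route the paper itself designates (the paper only cites \cite[Theorem 8.2.13]{jourdan2016phd} for the proof, but its surrounding discussion makes clear that item~1 --- commutation with $\lsharp\mathcal S$ --- is the key fact from which items~2 and~3 are meant to follow). I checked the delicate points: the vacuity of the crossing condition on $(u,\overline u)$ entries, the resulting uniform formula $\lsharp\mathcal S(A\dbmjoin_{weak}B)_{uv}=\min\{J^0_{uv};\max\{A^{1/2}_{uv};B^{1/2}_{uv}\}\}$, and the final identity (which, after your reduction $J^0_{uv}=\max\{A_{uv};\lsharp\mathcal S(B)_{uv}\}$ for $A_{uv}\leq B_{uv}$, is a distributive-lattice computation whose only nontrivial ingredient is $\min\{A_{uv};B^{1/2}_{uv}\}\leq\min\{B_{uv};B^{1/2}_{uv}\}$); all go through.
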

\begin{proof} See~\cite[Theorem 8.2.13]{jourdan2016phd}.
\end{proof}

\subsection{Assuming Constraints}
\label{sec:octwassume}

An important operation for abstract domains is the \verb+assume+
primitive, which refines the internal state of an abstract domain
using a new assumption over the set of approximated environments. In
this section, we only consider the cases where this operation is
exact, i.e., it does not lead to any approximation. These cases amount
to assuming that $\rho(x)-\rho(y) \leq C$, for $C\in\mathbb R$ and $x$
and $y$ two variables. In order to deal with arbitrary linear
inequalities or even arbitrary arithmetical constraints, it is
necessary to write some supporting module for the Octagon domain that
will translate arbitrary constraints into exact ones. Such a support
module is out of the scope of this paper: we refer the reader
to~\cite{mine2004phd} for more detail. Moreover, note that the
combination of \verb+assume+ together with the \verb+forget+ let us
emulate variable assignment\footnote{An efficient implementation would
  however use a specific, optimized implementation for assignments.},
hence we do not detail variable assignment in this paper.

We give the \verb+assume+ primitive in two versions: one adapted to
$\gamma_{pot}$, and one adapted to $\gamma_{oct}$. We first give the
concrete semantics of this operation, which is the same for irregular
and regular environments:

\begin{definition}[Assuming constraints in the concrete]
  Let $C \in \mathbb{R}$, $x,y\in\varpm$ and $S$ be a set of irregular
  environments. We define:
  \begin{equation*}
    \mathcal A^{x-y \leq C}(S) =
    \{ \sigma \in S \mid \sigma(x) - \sigma(y) \leq C \}
  \end{equation*}
\end{definition}

It is easy to see that we can reflect exactly this operation in
DBMs. Indeed, it suffices to change the cell corresponding to the new
constraint, if the old value is larger than the new one. However, this
does not maintain any kind of closedness, whether it be the normal
closure, the strong closure or the weak closedness. As a result, it is
necessary to run a closure algorithm when inserting the new
constraint. These algorithms are costly (i.e., cubic complexity), and
do not leverage the fact that the input matrix is already almost
closed. For this reason, \emph{incremental closure algorithms} have
been developed, with quadratic complexity. We give here a slightly
different presentation of these algorithms as the one originally given
by Miné~\cite{mine2004phd}:

\newcommand\sharpa{\lsharp{\!\mathcal A}}

\begin{definition}[Assuming constraints in the abstract]
  \label{def:octassumedef}
  Let $C \in \mathbb{R}$, $B$ be a DBM and $x,y\in\varpm$.
  \begin{enumerate}
  \item We define $\sharpa^{x-y\leq C}_{pot}(B)$ the DBM such that,
    for $u,v\in\varpm$:
    \begin{equation*}
      \sharpa^{x-y\leq C}_{pot}(B)_{uv} = \min\{ B_{uv}\;;\; B_{ux} +
      C + B_{yv} \}
    \end{equation*}
  \item If $x,y\in\varplus$, we define $\sharpa^{x-y\leq C}_{weak}(B)$
    and $\sharpa^{x-y\leq C}_{oct}(B)$ as:
    \begin{align*}
      \sharpa^{x-y\leq C}_{weak}(B) &= \sharpa^{\overline y-\overline x\leq C}_{pot}(\sharpa^{x-y\leq
        C}_{pot}(B)) \\
      \sharpa^{x-y\leq C}_{oct}(B) &= \lsharp\mathcal
      S(\sharpa^{x-y\leq C}_{weak}(B))
    \end{align*}
  \end{enumerate}
\end{definition}

It is well-known~\cite[Theorem 8.2.14]{jourdan2016phd} that
$\sharpa^{x-y\leq C}_{oct}$ is sound and exact when applied to a DBM
with a null diagonal. When applied to a strongly closed DBM $B$ with
$0 \leq C + B_{yx}$, the result is strongly closed. Therefore, an
implementation of the \texttt{assume} primitive in the strongly closed
setting first checks whether $0 \leq C + B_{yx}$. If so, it returns
$\sharpa^{x-y\leq C}_{oct}$; otherwise it returns $\bot$.

In particular, when applied to weakly closed DBMs, $\sharpa^{x-y\leq
  C}_{oct}$ is sound and exact, since weakly closed DBMs have null
diagonals. However, because this operator uses $\lsharp\mathcal S$, it
breaks sparsity. The advantage of using weakly closed DBMs is that, in
the setting of weakly closed DBMs, $\lsharp\mathcal S$ is no longer
needed: $\sharpa^{x-y\leq C}_{weak}$ can be used as-is, provided the
implementation additionally checks that $0 \leq 2C + B_{y\overline y}
+ B_{\overline xx}$. The following theorem summarizes this result, and
justifies the use of this transfer function in the context of sparse
DBMs without loss of precision:

\begin{theorem}
  Let $C \in \mathbb{R}$, $B$ a weakly closed DBM and $x,y \in
  \varplus$. We have:
  \begin{enumerate}
  \item If $0 \leq 2C + B_{y\overline y} +
      B_{\overline xx}$, then
    $\lsharp\mathcal S(\sharpa^{x-y\leq C}_{weak}(B))
    =
    \sharpa^{x-y\leq C}_{oct}(\lsharp\mathcal S(B))$
  \item $\gamma_{oct}(\sharpa^{x-y\leq C}_{weak}(B)) =
    \mathcal A^{x-y\leq C}(\gamma_{oct}(B))$
  \item If $B$ is weakly closed, the following statements are
    equivalent:
    \begin{enumerate}[(i)]
    \item $\mathcal A^{x-y\leq C}(\gamma_{oct}(B)) \neq \emptyset$
    \item $0 \leq \sharpa^{x-y\leq C}_{weak}(B)_{xx}$
    \item $0 \leq C + B_{yx}$ and $0 \leq 2C + B_{y\overline y} +
      B_{\overline xx}$
    \item $\sharpa^{x-y\leq C}_{weak}(B)$ is weakly closed.
    \end{enumerate}
  \end{enumerate}
\end{theorem}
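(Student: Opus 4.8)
The plan is to establish the three items in the order 2, 1, 3: item~2 is elementary and is reused in item~3, and item~1 --- whose hypothesis is exactly the second conjunct of condition~$(iii)$ --- is the computational heart that feeds the proof of $(iii)\Rightarrow(iv)$. For item~2 I would argue by double inclusion. Since each potential-assume step can only decrease matrix entries, $\sharpa^{x-y\leq C}_{weak}(B)\dbmleq B$, hence $\gamma_{oct}(\sharpa^{x-y\leq C}_{weak}(B))\subseteq\gamma_{oct}(B)$ by monotonicity of $\gamma_{oct}$; and since $B$ has a null diagonal, $\sharpa^{x-y\leq C}_{weak}(B)_{xy}\leq B_{xx}+C+B_{yy}=C$, so every environment in the concretization satisfies $\rho(x)-\rho(y)\leq C$. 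This gives $\subseteq$. For $\supseteq$, any regular $\rho$ with $\rho(x)-\rho(y)\leq C$ also has $\rho(\overline y)-\rho(\overline x)=\rho(x)-\rho(y)\leq C$; and every cell of $\sharpa^{x-y\leq C}_{weak}(B)$ is a minimum of bounds, each a sum of cells of $B$ and copies of $C$ along a path using only the edges $x\to y$ and $\overline y\to\overline x$ with weight $C$; each such bound is entailed for $\rho$, so $\rho\in\gamma_{oct}(\sharpa^{x-y\leq C}_{weak}(B))$.

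For item~1 I would first reduce it to one inequality. By definition $\sharpa^{x-y\leq C}_{oct}(\lsharp\mathcal S(B))=\lsharp\mathcal S(\sharpa^{x-y\leq C}_{weak}(\lsharp\mathcal S(B)))$; moreover $\lsharp\mathcal S$ is monotone, idempotent and leaves every cell $D_{u\overline u}$ unchanged (because $\tfrac12(D_{u\overline u}+D_{\overline{\overline u}\,\overline u})=D_{u\overline u}$), and $\sharpa^{x-y\leq C}_{weak}$ is monotone. So $\lsharp\mathcal S(B)\dbmleq B$ yields $\lsharp\mathcal S(\sharpa^{x-y\leq C}_{weak}(\lsharp\mathcal S(B)))\dbmleq\lsharp\mathcal S(\sharpa^{x-y\leq C}_{weak}(B))$ for free, and for the converse it suffices to prove
\[
  \lsharp\mathcal S(\sharpa^{x-y\leq C}_{weak}(B))\;\dbmleq\;\sharpa^{x-y\leq C}_{weak}(\lsharp\mathcal S(B)),
\]
since applying $\lsharp\mathcal S$ to both sides and using idempotence on the left recovers it. To prove this entry by entry, expand the right-hand cell into its finitely many defining path bounds, replace each occurrence of $\lsharp\mathcal S(B)_{ab}$ by $\min\{B_{ab},\tfrac12(B_{a\overline a}+B_{\overline bb})\}$ and distribute; it then remains to check that every resulting sum is $\geq\lsharp\mathcal S(\sharpa^{x-y\leq C}_{weak}(B))_{uv}=\min\{\sharpa^{x-y\leq C}_{weak}(B)_{uv},\tfrac12(\sharpa^{x-y\leq C}_{weak}(B)_{u\overline u}+\sharpa^{x-y\leq C}_{weak}(B)_{\overline vv})\}$. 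A sum built only from cells $B_{ab}$ is one of the bounds defining $\sharpa^{x-y\leq C}_{weak}(B)_{uv}$; a sum containing a half-interval term $\tfrac12(B_{a\overline a}+B_{\overline bb})$ is, after reflecting the portion of the path between the two endpoints of that term via the coherence of $\lsharp\mathcal S(B)$ and the weak-closedness inequalities $\lsharp\mathcal S(B)_{pr}\leq B_{pq}+B_{qr}$, bounded below by half of a $u\to\overline u$ bound plus an $\overline v\to v$ bound of $\sharpa^{x-y\leq C}_{weak}(B)$. The only family of sums this does not immediately settle is the one that traverses the assumed edge twice with no interval term to absorb the deficit, and the hypothesis $0\leq 2C+B_{y\overline y}+B_{\overline xx}$ is used precisely to make those nonnegative.

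For item~3 I would prove $(i)\Rightarrow(iii)\Rightarrow(iv)\Rightarrow(i)$ together with $(ii)\Leftrightarrow(iii)$. A witness $\rho\in\gamma_{oct}(B)$ with $\rho(x)-\rho(y)\leq C$ gives $0=(\rho(y)-\rho(x))+(\rho(x)-\rho(y))\leq B_{yx}+C$ and, using $\rho(\overline u)=-\rho(u)$, $0=(\rho(y)-\rho(\overline y))+(\rho(\overline x)-\rho(x))+2(\rho(x)-\rho(y))\leq B_{y\overline y}+B_{\overline xx}+2C$, so $(i)\Rightarrow(iii)$. Unfolding the two potential-assume steps yields the closed form $\sharpa^{x-y\leq C}_{weak}(B)_{xx}=m+\min\{0,\;C+B_{x\overline y}+B_{\overline xx},\;2C+B_{y\overline y}+B_{\overline xx}\}$ with $m=\min\{0,C+B_{yx}\}$; if $(iii)$ holds then $m=0$, the last inner term is $\geq 0$, and the middle term is $\geq C+\lsharp\mathcal S(B)_{yx}\geq 0$ by coherence and weak closedness of $B$, so the cell equals $0$; if $(iii)$ fails then $m<0$ or the last inner term is $<0$, and in both cases the cell is negative --- hence $(ii)\Leftrightarrow(iii)$. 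For $(iii)\Rightarrow(iv)$: $(iii)$ implies the hypothesis of item~1, so $\lsharp\mathcal S(\sharpa^{x-y\leq C}_{weak}(B))=\sharpa^{x-y\leq C}_{oct}(\lsharp\mathcal S(B))$; since $\lsharp\mathcal S(B)$ is strongly closed and $(iii)$ gives $0\leq C+\lsharp\mathcal S(B)_{yx}=\min\{C+B_{yx},\,C+\tfrac12(B_{y\overline y}+B_{\overline xx})\}$, the recalled property of $\sharpa^{x-y\leq C}_{oct}$ makes this DBM strongly closed; a computation like the one above, again using the weak-closedness inequalities of $B$ and $(iii)$, shows that all diagonal cells of $\sharpa^{x-y\leq C}_{weak}(B)$ equal $0$, so by \defref{oct:weakclosed} it is weakly closed. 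Finally $(iv)\Rightarrow(i)$: a weakly closed DBM has a strongly closed strengthening (\defref{oct:weakclosed}), hence non-empty $\gamma_{oct}$, and $\gamma_{oct}\circ\lsharp\mathcal S=\gamma_{oct}$, so $\gamma_{oct}(\sharpa^{x-y\leq C}_{weak}(B))\neq\emptyset$; by item~2 this set is $\mathcal A^{x-y\leq C}(\gamma_{oct}(B))$.

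The main obstacle is the entry-wise inequality in item~1: the reduction above is short, but verifying that inequality demands a careful finite case analysis of how the half-interval terms $\tfrac12(B_{a\overline a}+B_{\overline bb})$ can appear inside the expanded path bounds and of the reflections that convert them into interval bounds of $\sharpa^{x-y\leq C}_{weak}(B)$, and it requires isolating the single family of bounds whose nonnegativity genuinely needs the extra hypothesis $0\leq 2C+B_{y\overline y}+B_{\overline xx}$.
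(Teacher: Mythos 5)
The paper itself does not prove this theorem --- it defers to the author's thesis --- so your plan can only be judged on its own terms. Its architecture is sound, and two of the three items are essentially complete. Item~2 is correct as written. In item~3, the factorization $\sharpa^{x-y\leq C}_{weak}(B)_{xx} = m + \min\{0,\; C+B_{x\overline y}+B_{\overline xx},\; 2C+B_{y\overline y}+B_{\overline xx}\}$ with $m=\min\{0,C+B_{yx}\}$ does reproduce the six path bounds, your justification of $C + B_{x\overline y} + B_{\overline xx}\geq 0$ via $\lsharp\mathcal S(B)_{\overline x\,\overline y}\leq B_{\overline xx}+B_{x\overline y}$ and coherence of $\lsharp\mathcal S(B)$ is exactly the kind of step needed, and the cycle $(i)\Rightarrow(iii)\Rightarrow(iv)\Rightarrow(i)$ together with $(ii)\Leftrightarrow(iii)$ closes all the equivalences.

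The genuine gap is in item~1. The reduction to the single entry-wise inequality $\lsharp\mathcal S(\sharpa^{x-y\leq C}_{weak}(B)) \dbmleq \sharpa^{x-y\leq C}_{weak}(\lsharp\mathcal S(B))$ is valid (monotonicity, idempotence of $\lsharp\mathcal S$, invariance of the interval cells), but that inequality \emph{is} the content of the theorem, and you have not proved it; you have only asserted that ``reflection via coherence'' disposes of the half-interval terms. That assertion hides a real difficulty: $B$ itself need not be coherent, only $\lsharp\mathcal S(B)$ is, so reflecting a path segment replaces $B_{ab}$ by $\lsharp\mathcal S(B)_{\overline b\,\overline a}=\lsharp\mathcal S(B)_{ab}$, which is a \emph{lower} bound on $B_{ab}$ --- the wrong direction when the reflected segment must reappear inside a $B$-path bound of $\sharpa^{x-y\leq C}_{weak}(B)_{u\overline u}$. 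Concretely, the expanded term $B_{ux}+C+\tfrac12(B_{y\overline y}+B_{\overline vv})$ forces you to prove $\sharpa^{x-y\leq C}_{weak}(B)_{u\overline u}\leq 2B_{ux}+2C+B_{y\overline y}$, and this needs a case split on whether $\lsharp\mathcal S(B)_{ux}$ is realized by $B_{ux}$ or by $\tfrac12(B_{u\overline u}+B_{\overline xx})$: in the first case coherence forces $B_{\overline x\,\overline u}=B_{ux}$ and a path bound applies, while in the second the hypothesis $0\leq 2C+B_{y\overline y}+B_{\overline xx}$ is what rescues $B_{u\overline u}\leq 2B_{ux}-B_{\overline xx}\leq 2B_{ux}+2C+B_{y\overline y}$. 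This also shows that your localization of where the hypothesis is needed (``only sums traversing the assumed edge twice with no interval term'') is inaccurate: it is already needed for a single traversal combined with a half-interval term. The plan is workable, but the heart of the proof is missing, and the one hint you give about how to complete it would lead a reader astray.
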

\begin{proof} See, e.g.,~\cite[Theorem 8.2.15]{jourdan2016phd}.
\end{proof}

\subsection{Tightening}
\label{sec:octTighten}

Miné~\cite{mine2004phd} and Bagnara et al.~\cite{bagnara2009weakly}
study the case of the Octagon abstract domain when the considered
environments take only values in $\mathbb{Z}$: in contrast with the
previous sections, in this case, the strongly closed DBMs are not all
canonical, so that modified algorithms need to be used. We explain
here that the use of the weakly closed setting is compatible with the
integer case. To this end, we define a different concretization
function, $\gamma^{\mathbb{Z}}_{oct}$, that concretizes to integer
environments:

\begin{definition}[Integer concretization of octagons]
  Let $B$ be a DBM. We define:
  \begin{equation*}
    \gamma^{\mathbb{Z}}_{oct}(B) = \{ \rho \in \gamma_{oct}(B) \mid \forall
    u\in\varplus,\;\rho(u) \in \mathbb{Z}\}
  \end{equation*}
\end{definition}

If we consider only integer environments, best abstractions have a
slightly stronger characterization. Such DBM are said \emph{tightly
  closed}. We also define the notion of \emph{weakly tightly closed}
DBMs, which is the analog of \emph{weakly closed} DBMs for the integer
case:

\begin{definition}[Tight closure]
  Let $B$ be a DBM. $B$ is \emph{tightly closed} (respectively
  weakly tightly closed) when:
  \begin{itemize}
  \item $B$ is strongly closed (respectively weakly closed)
  \item $\forall uv \in \varpm,\;B_{uv} \in \mathbb{Z}$
  \item $\forall u \in \varpm,\;\frac{B_{u\overline u}}{2} \in \mathbb{Z}$
  \end{itemize}
\end{definition}

Tightly closed DBMs are exactly best abstractions for integer
environments~\cite[Theorem 8.2.17]{jourdan2016phd}. Bagnara et al.\
\cite[\S 6]{bagnara2009weakly} give efficient algorithms for computing
the tight closure of a DBM. It consists in using a \emph{tightening}
operation before strengthening. The \emph{tightening} operation is
defined by:

\begin{definition}[Tightening]
  Let $B$ a DBM with elements in $\mathbb{Z}$. We define
  $\lsharp\mathcal{T}(B)$ be the DBM with elements in $\mathbb{Z}$
  such that, for $u,v\in\varpm$:
  \begin{equation*}
    \lsharp\mathcal{T}(B)_{uv} =
    \begin{cases}
      B_{uv} - 1 & \text{if $u = \overline v$ and $B_{uv}$ is odd}\\
      B_{uv} & \text{otherwise}
    \end{cases}
  \end{equation*}
\end{definition}

The following theorem gives the essential property of the tightening
operation:

\begin{theorem}
  \label{thm:octtightening}
  Let $B$ a weakly closed DBM with elements in $\mathbb{Z}$. We
  suppose that $\forall u\in\varpm,\;0 \leq
  \lsharp\mathcal{T}(B)_{u\overline u} +
  \lsharp\mathcal{T}(B)_{\overline uu}$. Then $\lsharp\mathcal{T}(B)$
  is weakly tightly closed.
\end{theorem}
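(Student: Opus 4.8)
We first dispatch the easy conditions. Since $\lsharp\mathcal{T}$ only ever subtracts an integer, $\lsharp\mathcal{T}(B)$ still has entries in $\mathbb{Z}$, and by construction every $\lsharp\mathcal{T}(B)_{u\overline u}$ is even, so $\lsharp\mathcal{T}(B)$ meets the two integrality requirements in the definition of a weakly tightly closed DBM. A diagonal entry $B_{vv}$ is never of the shape $B_{p\overline p}$, hence is untouched by $\lsharp\mathcal{T}$, so $\lsharp\mathcal{T}(B)$ inherits the null diagonal of $B$. It remains to prove that $\lsharp\mathcal{T}(B)$ is weakly closed, which we do through statement~(ii) of \defref{oct:weakclosed}: we must show that $\lsharp\mathcal{S}(\lsharp\mathcal{T}(B))$ is coherent and that $\lsharp\mathcal{S}(\lsharp\mathcal{T}(B))_{uw}\le \lsharp\mathcal{T}(B)_{uv}+\lsharp\mathcal{T}(B)_{vw}$ for all $u,v,w$.

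The pivotal remark is that $\lsharp\mathcal{T}$ modifies only entries of the form $(p,\overline p)$, decreasing each by at most $1$, while neither $\lsharp\mathcal{S}$ nor $\lsharp\mathcal{T}$ alters such entries any further. Writing $\overline B:=\lsharp\mathcal{S}(B)$, which is \emph{strongly} closed because $B$ is weakly closed (\defref{oct:weakclosed}, statement~(i)), we have $\overline B_{p\overline p}=B_{p\overline p}$ for every $p$, and a short computation gives, for all $u,w$,
\begin{equation*}
\lsharp\mathcal{S}(\lsharp\mathcal{T}(B))_{uw}\;=\;\min\Bigl\{\,\overline B_{uw}\,,\;\tfrac{\lsharp\mathcal{T}(B)_{u\overline u}+\lsharp\mathcal{T}(B)_{\overline w w}}{2}\,\Bigr\}.
\end{equation*}
Coherence of the left-hand side then follows from coherence of the strongly closed $\overline B$ together with the invariance of the half-term under $(u,w)\mapsto(\overline w,\overline u)$; and taking $w=u$ above, with $\overline B_{uu}=0$ and the hypothesis $0\le\lsharp\mathcal{T}(B)_{u\overline u}+\lsharp\mathcal{T}(B)_{\overline u u}$, yields $\lsharp\mathcal{S}(\lsharp\mathcal{T}(B))_{uu}=0$.

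For the inequality, fix $u,v,w$. If $v\in\{u,w\}$ the right-hand side collapses to $\lsharp\mathcal{T}(B)_{uw}$, which dominates the left-hand side because $\lsharp\mathcal{S}$ is decreasing; so assume $v\notin\{u,w\}$ and distinguish cases by whether $u=\overline v$ and/or $v=\overline w$, i.e.\ whether $\lsharp\mathcal{T}(B)_{uv}$, resp.\ $\lsharp\mathcal{T}(B)_{vw}$, is affected by tightening. If neither holds, then $\lsharp\mathcal{T}(B)_{uv}=B_{uv}$ and $\lsharp\mathcal{T}(B)_{vw}=B_{vw}$, and we conclude from $\lsharp\mathcal{S}(\lsharp\mathcal{T}(B))_{uw}\le\overline B_{uw}=\lsharp\mathcal{S}(B)_{uw}\le B_{uv}+B_{vw}$, the last inequality being statement~(ii) of \defref{oct:weakclosed} for the weakly closed $B$. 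If both hold, then $v=\overline u=\overline w$, so $u=w$; the right-hand side equals $\lsharp\mathcal{T}(B)_{u\overline u}+\lsharp\mathcal{T}(B)_{\overline u u}\ge 0$ by hypothesis, and the left-hand side is $\lsharp\mathcal{S}(\lsharp\mathcal{T}(B))_{uu}=0$. If exactly one holds the two sub-cases are handled identically, so suppose $u=\overline v$ and $v\ne\overline w$; the goal is $\lsharp\mathcal{S}(\lsharp\mathcal{T}(B))_{uw}\le\lsharp\mathcal{T}(B)_{u\overline u}+B_{\overline u w}$, and bounding the left-hand side by its half-term reduces it to $\lsharp\mathcal{T}(B)_{\overline w w}\le 2B_{\overline u w}+\lsharp\mathcal{T}(B)_{u\overline u}$. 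The triangle inequality of $\overline B$ through $u$, its coherence, and its strengthening inequality combine to give $B_{\overline w w}\le 2B_{\overline u w}+B_{u\overline u}$; since $\lsharp\mathcal{T}(B)_{\overline w w}\le B_{\overline w w}$ and $\lsharp\mathcal{T}(B)_{u\overline u}\ge B_{u\overline u}-1$, this settles the case $B_{u\overline u}$ even, and if $B_{u\overline u}$ is odd then $2B_{\overline u w}+B_{u\overline u}$ is odd while $\lsharp\mathcal{T}(B)_{\overline w w}$ is even, so the last inequality is automatically strict by one. (The same bound $B_{\overline w w}\le 2B_{\overline u w}+B_{u\overline u}$ also disposes of the $+\infty$ subcases, forcing $B_{\overline u w}=+\infty$ whenever $B_{\overline w w}=+\infty$ and $B_{u\overline u}$ is finite.)

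The main obstacle is this last case: the parity bookkeeping — the evenness of every $\lsharp\mathcal{T}(B)_{p\overline p}$ is exactly what absorbs the $-1$'s, and the hypothesis $0\le\lsharp\mathcal{T}(B)_{u\overline u}+\lsharp\mathcal{T}(B)_{\overline u u}$ is precisely what rules out the degenerate configurations — together with carefully tracking $+\infty$ entries through the closure and strengthening inequalities of $\overline B$. The reduction in the second paragraph, that $\lsharp\mathcal{S}(\lsharp\mathcal{T}(B))$ depends only on $\overline B$ and the tightened half-terms, is what keeps all of this manageable.
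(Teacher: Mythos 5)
The paper itself gives no in-line proof of \thmref{thm:octtightening} (it defers to Theorem~8.2.18 of the cited thesis), so there is no textual argument to compare yours against; judged on its own, your proof is correct and self-contained, and its organizing device --- the identity $\lsharp\mathcal S(\lsharp\mathcal T(B))_{uw}=\min\{\lsharp\mathcal S(B)_{uw},\tfrac12(\lsharp\mathcal T(B)_{u\overline u}+\lsharp\mathcal T(B)_{\overline ww})\}$, which lets you work with the strongly closed $\lsharp\mathcal S(B)$ throughout --- is exactly the right reduction. The case analysis on which of $B_{uv},B_{vw}$ is touched by tightening, the use of the hypothesis only in the $u=w$, $v=\overline u$ configuration, and the parity argument absorbing the $-1$ are all sound. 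The one place where your justification does not quite support its conclusion is the parenthetical on the $+\infty$ subcase: the bound $B_{\overline ww}\le 2B_{\overline uw}+B_{u\overline u}$ is obtained by subtracting $\tfrac12\lsharp\mathcal S(B)_{\overline ww}$ from both sides of an inequality, which is meaningless when $B_{\overline ww}=+\infty$, so that derivation cannot ``force'' $B_{\overline uw}=+\infty$. The conclusion is nevertheless true and needs a separate two-line argument: if $B_{\overline ww}=+\infty$ while $B_{u\overline u}$ and $B_{\overline uw}$ are finite, then weak closedness of $B$ at $(u,w)$ through $\overline u$ gives $\lsharp\mathcal S(B)_{uw}\le B_{u\overline u}+B_{\overline uw}<+\infty$, whereas $\lsharp\mathcal S(B)_{\overline w\overline u}=B_{\overline w\overline u}$ must be $+\infty$ (by weak closedness of $B$ at $(\overline w,w)$ through $\overline u$), contradicting coherence of $\lsharp\mathcal S(B)$. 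With that patch the proof is complete.
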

\begin{proof} See, e.g.,~\cite[Theorem 8.2.18]{jourdan2016phd}.
\end{proof}

This theorem has two consequences. First, as already explained by
Bagnara et al.~\cite[\S 6]{bagnara2009weakly}, it gives an efficient
algorithm to compute tight closure: one would compute the closure of
the input matrix, then tighten it and finally strengthen it. Second,
our sparse algorithms need only small adjustments when used with
integer environments: instead of maintaining the DBMs weakly closed,
we just have to make them weakly tightly closed by tightening them
after each operation.

Note, however, that tightening does not address the case of mixed
environments, where some variables are known to have integer values,
and some others can have an arbitrary real values. To the best of our
knowledge, there is no known efficient closure algorithm supporting
this use case, even in the dense setting.

\section{Conclusion}
\label{sec:conclusion}

In this paper, we presented new algorithms for the Octagon abstract
domain, which preserve the sparsity of the representation of
octagons. These algorithms are as precise as the usual ones, and rely
on a weaker invariant over difference bound matrices, called weak
closedness. We have shown that these algorithms can be used in the
context of rational or real environments as well as in the context of
integer environments.

We implemented and formally verified in Coq these algorithms in the
context of the Verasco static
analyzer~\cite{jourdan2015formally,jourdan2016phd,verasco}. The use of
these new algorithms improved the performances of the Octagon abstract
domain by at least one order of magnitude.

There are still possible improvements to these algorithms: in
particular, we think that it could be profitable to sparsify
difference bound matrices as much as possible after each abstract
operation, while still maintaining them weakly closed. Indeed,
abstract operations may infer bounds in difference bound matrices that
can actually be deduced from non-relational bounds, therefore missing
opportunity of sparsity.

We think the reduction algorithm presented by Bagnara et
al.~\cite{bagnara2009weakly} can be adapted to compute reduced
difference bound matrices using only weakly closed difference bound
matrices. This would lead to a simpler widening algorithm based on a
semantic definition as described by Bagnara et al.~\cite[\S
4.2]{bagnara2009weakly}. We believe the implementation of these new
algorithms in state-of-the-art static analyzers, by using, for
example, the framework developed by Singh et
al.~\cite{singh2015making} would lead to a significant performance
improvement.

\bibliographystyle{entcs}
\bibliography{bib}

\end{document}